\newtheorem{theorem}{Theorem}
\newtheorem{lemma}[theorem]{Lemma}
\newtheorem{corollary}[theorem]{Corollary}
\newtheorem{proposition}[theorem]{Proposition}
\newtheorem{remark}[theorem]{Remark}
\newtheorem{definition}{Definition}
\algrenewcommand\alglinenumber[1]{{\scriptsize#1}}
\algrenewcommand\algorithmicrequire{\textbf{Input:}}
\algrenewcommand\algorithmicensure{\textbf{Output:}}
\def\mkfancyprefix#1#2{%
\expandafter\def\csname fancyref#1labelprefix\endcsname{#1}%
\begingroup\def\x{\endgroup\frefformat{plain}}%
    \expandafter\x\csname fancyref#1labelprefix\endcsname
    {\MakeLowercase{#2}\fancyrefdefaultspacing##1}%
\begingroup\def\x{\endgroup\Frefformat{plain}}%
    \expandafter\x\csname fancyref#1labelprefix\endcsname
    {#2\fancyrefdefaultspacing##1}%
\begingroup\def\x{\endgroup\frefformat{vario}}%
    \expandafter\x\csname fancyref#1labelprefix\endcsname
    {\MakeLowercase{#2}\fancyrefdefaultspacing##1##3}%
\begingroup\def\x{\endgroup\Frefformat{vario}}%
    \expandafter\x\csname fancyref#1labelprefix\endcsname
    {#2\fancyrefdefaultspacing##1##3}%
}
\fancyrefchangeprefix{\fancyrefeqlabelprefix}{eqn}
\newcommand{\cref}[1]{\Fref{#1}}
\definecolor{sunset}{rgb}{1,0.5,.05}
\definecolor{marine}{rgb}{0,0,.7}
\definecolor{navy}{rgb}{0,0,.5}
\definecolor{forest}{rgb}{0,.6,0}
\definecolor{brown}{rgb}{0.59, 0.29, 0.0}
\definecolor{dirtyblue}{rgb}{.4, 0.4, 0.6}
\newcommand{\alternative}[1]{{\color{brown}[/alternative: #1]}}
\renewcommand{\alternative}[1]{}
\def\ve#1{{\mathchoice{\mbox{\boldmath$\displaystyle #1$}}%
              {\mbox{\boldmath$\textstyle #1$}}%
              {\mbox{\boldmath$\scriptstyle #1$}}%
              {\mbox{\boldmath$\scriptscriptstyle #1$}}}}
\renewcommand{\vec}[1]{\ve{#1}}
\newcommand{\ZZ}{\mathbb{Z}}
\newcommand{\NN}{\ZZ_{\geq 0}}
\newcommand{\FF}{\mathbb{F}}
\newcommand{\Fq}{\mathbb{F}_q}
\newcommand{\Fsi}[1]{\mathbb{F}_{s_{#1}}}
\newcommand{\evpolys}{\mathcal{P}^{n,k}_{\tVec,\hVec,\etaVec}}
\newcommand{\Code}{\mathcal{C}}
\newcommand{\Cmult}{\Code^{n,k}(\alphaVec,\tVec,\hVec,\etaVec)}
\newcommand{\dual}{^\perp}
\newcommand{\Van}{\vec V}
\newcommand{\evword}{\mathrm{ev}}
\newcommand{\ev}[2]{\mathrm{ev}_{#2}(#1)}
\newcommand{\numTwists}{\ell}
\newcommand{\tVec}{{\ve t}}
\newcommand{\hVec}{{\ve h}}
\newcommand{\alphaVec}{{\ve \alpha}}
\newcommand{\etaVec}{{\ve \eta}}
\newcommand{\mJ}{{\ve J}}
\newcommand{\mA}{{\ve A}}
\newcommand{\mL}{{\ve L}}
\newcommand{\mG}{{\ve G}}
\newcommand{\mH}{{\ve H}}
\newcommand{\mI}{{\ve I}}
\newcommand{\transpose}{^\mathrm{T}}
\newcommand{\diag}{\mathrm{diag}}
\newcommand{\Family}{\mathcal{F}^{n,k}_{\ell}}
\newcommand{\FamilyMult}{\mathcal{\tilde F}^{n,k}_{\ell}}
\newcommand\Osoft{O^{\scriptscriptstyle \sim}\!}
\begin{document}

\title{Structural Properties of Twisted Reed--Solomon Codes with Applications to Cryptography}
\author{\IEEEauthorblockN{Peter Beelen$^{1}$, Martin Bossert$^{2}$, Sven Puchinger$^{3}$, and Johan Rosenkilde$^{1}$}\\
\IEEEauthorblockA{
$^1$Department of Applied Mathematics \& Computer Science, Technical University of Denmark, Lyngby, Denmark\\
$^2$Institute of Communications Engineering, Ulm University, Ulm, Germany\\
$^3$Institute for Communications Engineering, Technical University of Munich (TUM), Munich, Germany\\
Email: \emph{pabe@dtu.dk, martin.bossert@uni-ulm.de, sven.puchinger@tum.de, jsrn@jsrn.dk}
\thanks{This work was done while S.~Puchinger was with Ulm University.}}}

\maketitle

\begin{abstract}
We present a generalisation of Twisted Reed--Solomon codes containing a new large class of MDS codes.
We prove that the code class contains a large subfamily that is closed under duality.
Furthermore, we study the Schur squares of the new codes and show that their dimension is often large.
Using these structural properties, we single out a subfamily of the new codes which could be considered for code-based cryptography: These codes resist some existing structural attacks for Reed--Solomon-like codes, i.e.~methods for retrieving the code parameters from an obfuscated generator matrix.
\end{abstract}

\begin{IEEEkeywords}
MDS Codes, Reed--Solomon Codes, McEliece Cryptosystem, Structural Attacks
\end{IEEEkeywords}

\section{Introduction}

\noindent
Twisted Reed--Solomon codes \cite{beelen2017twisted} are maximum distance separable (MDS) codes\footnote{Their length $n$, dimension $k$, and minimum distance $d$ fulfil $d=n-k+1$.}
that can be efficiently decoded.
Their construction is based on Reed--Solomon (RS) codes, by adding an extra monomial (``twist'') to the low-degree evaluation polynomials and choosing the evaluation points in a suitable~way.
We present a generalisation of twisted RS codes, where instead of one additional monomial, we add $\ell$ monomials (``twists'') to the evaluation polynomials, similar to the recent extension of twisted Gabidulin codes \cite{puchinger_further_2017}.
We describe a large family of these which are MDS codes.
We study the Schur square of twisted RS code and show that, unlike for RS codes, the dimension of the Schur square is not small.
Moreover, we prove that the dual of a large class of twisted RS codes is equivalent to a twisted RS code.
Finally, we show that decoding is feasible for codes with few twists.

As a potential application, we consider the use of twisted RS codes in the McEliece cryptosystem \cite{mceliece1978}, which is a public-key cryptosystem and one of the candidates for post-quantum cryptography: a structured code $\Code$ with an efficient decoding algorithm is the ``secret key'', while an obfuscated generator matrix $\tilde \mG$ of $\Code$ is made public, together with the decoding algorithm's decoding radius $\tau$.
Encryption consists of encoding a secret message with $\tilde \mG$ and adding $\tau$ errors at random.
A ``structural attack'' on the system is to recover an efficiently decodable code that is sufficiently close to $\Code$ so that decoding the encrypted message is feasible.
Such an attack is mostly interesting if it is faster than a generic decoding algorithm, e.g.~information-set decoding~see e.g.~\cite{prange1962use,mceliece1978,peters2010information}.
The original proposal \cite{mceliece1978}, which remains unbroken, uses binary Goppa codes.
For most other proposed codes, efficient structural attacks have been found (cf. \cite{overbeck2009code}).
Attacks on RS-like codes were presented in \cite{sidelnikov1992insecurity,wieschebrink2006attack,wieschebrink2010cryptanalysis,couvreur2014distinguisher}.

We single out a family of twisted RS codes for which some of these attacks will provably not work.
For the only two other attacks that we know of, Wieschebrink's attack \cite{wieschebrink2006attack} on the dual code and Wieschebrink's squaring attack \cite{wieschebrink2010cryptanalysis}, we give compelling arguments for why they should not work, but more thorough analysis is needed.
The number of twists in our codes is $\ell= \lfloor\tfrac{1}{R}\rfloor$ for long enough codes, where $R$ is the code rate.

The new codes are over large fields with field size $q \approx n^{2^\ell}$.
This increases the size of storing the public key at a given length and dimension, i.e.~the generator matrix, but since the complexities of generic attacks strongly depend on the field size, quite short lengths are sufficient for a target security level.
We give example parameters for security levels $2^{100}$ and $2^{128}$ whose public key sizes are reduced by factors $2.7$ and $7.4$ compared to suggested parameters when using binary Goppa codes \cite{canteaut1998cryptanalysis,Barbier_2011} for the same security levels.

Notationally, matrices and vectors are generally bold-face, e.g.~$\mA$ and $\vec v$.
If $\vec v$ is a vector, then $v_i$ denotes its $i$'th element.
When adding a constant to a vector, e.g. $\vec v + 1$, we really mean $\vec v + (1,\ldots,1)$.
The diagonal matrix with diagonal entries $\vec v$ is denoted $\diag(\vec v)$.
We say that two codes are ``equivalent'' if one can be obtained from the other by permuting positions and element-wise scaling with non-zero field elements.

\section{Multi-twisted Reed--Solomon Codes}

\subsection{Definition}\label{ssec:definition}

Let $\mathcal{V} \subset \Fq[X]$ be a $\Fq$-linear subspace of polynomials.
Let $\alpha_1,\dots,\alpha_n \in \Fq$ be distinct and write $\alphaVec = [\alpha_1,\dots,\alpha_n]$.
We call $\alpha_1,\dots,\alpha_n$ the \emph{evaluation points}.
Then we define the \emph{evaluation map} of $\alphaVec$ on $\mathcal{V}$ by
\begin{align*}
\ev{\cdot}{\alphaVec} : \mathcal{V} \to \Fq^n, \quad
f \mapsto [f(\alpha_1),\dots,f(\alpha_n)].
\end{align*}
In this notation, an $[n,k]$ Reed--Solomon (RS) code with evaluation points $\vec\alpha$ is just the image of $\ev{\Fq[X]_{<k}}{\vec\alpha}$, where $\Fq[X]_{<k}$ denotes the set of polynomials in $\Fq[X]$ of degree at most $k-1$.
A generalised Reed--Solomon code (GRS) is a code which is equivalent to an RS code.

\begin{definition}\label{def:multi_twisted_polynomials}
  Given a finite field $\Fq$ and code parameters $[n,k]$, let $\numTwists \in \ZZ_{>0}$ and $\tVec,\hVec \in \ZZ_{>0}^\numTwists$ such that
  \begin{itemize}
    \item The $t_i$ are all distinct and $1 \leq t_i \leq n-k$.
    \item The $h_i$ are all distinct and $0 \leq h_i < k$.
  \end{itemize}
  Furthermore, let $\etaVec \in (\Fq \setminus \{0\})^\numTwists$.
  The set of \emph{$(\tVec, \hVec, \etaVec)$-twisted polynomials} is given as:
  \begin{equation*}
    \evpolys = \left\{ \sum_{i=0}^{k-1} f_i X^i + \sum_{j=1}^{\numTwists} \eta_j f_{h_j} X^{k-1+t_j} \ \big|\ f_i \in \Fq \right\} \ .
  \end{equation*}
  $\evpolys$ is a $k$-dimensional $\Fq$-linear subspace of $\Fq[X]$.
  We say that the space of polynomials has $\ell$ twists, each with twist $t_i$, hook $h_i$ and coefficient $\eta_i$.

  A \emph{$(\tVec, \hVec, \etaVec)$-twisted Reed--Solomon code} is given by $\alphaVec \in \Fq^n$ with distinct entries and twisted polynomials $\evpolys$ as:
  \begin{equation*}
    \Cmult = \ev{\evpolys}{\alphaVec} \subseteq \Fq^n \ .
  \end{equation*}
\end{definition}
For brevity, we will use the phrase \emph{twisted RS codes}.
Note that $\Cmult$ indeed has dimension $k$ since the evaluation map $\ev{\cdot}{\alphaVec}$ is injective on polynomials of degree $<n$ and any $f \in \evpolys$ satisfies $\deg f \le k-1+\max_i\{t_i\} < n$.

The twisted RS codes of \cite{beelen2017twisted} are the special case of the above definition when $\ell = 1$; these will be referred to as ``single-twisted RS codes''.
Specifically, the $(*)$-twisted codes introduced in \cite{beelen2017twisted} are the codes $\Code^{n,k}(\alphaVec,1,0,\eta)$, where the set of coordinates $\alphaVec$ is a subset of $\{0\} \cup G$ for a proper subgroup $G$ of $\Fq^*$ and $(-1)^k\eta^{-1} \in \Fq^*\setminus G$. The $(+)$-twisted codes from \cite{beelen2017twisted} were defined as $\Code^{n,k}(\alphaVec,1,k-1,\eta)$, where the set of coordinates $\alphaVec$ is a subset of $\{\infty\} \cup V$ for a proper subspace $V$ of $\Fq$ and $\eta^{-1} \in \Fq\setminus V$.
Note that in \cite{beelen2017twisted} we defined evaluation at $\infty$, but that in the current paper we for simplicity refrain from using $\infty$ as an evaluation point.

\subsection{A Constructive Class of MDS Twisted RS Codes}
\label{ssec:MDS_twisted_codes}

Not all twisted RS codes as defined in the previous section are MDS.
We will now describe a simple way to choose the parameters which lead to a large family of MDS codes.
This generalises a construction in \cite{beelen2017twisted} and is also inspired by a recently proposed family of twisted Gabidulin codes \cite{puchinger_further_2017} which are MRD, i.e., MDS with respect to the rank metric.

Consider some $[n,k]$ $\ell$-twisted RS code $\Cmult$ and write $H = \{ h_i \mid i = 1,\ldots,\ell \}$.
A generator matrix $\mG \in \Fq[\etaVec]^{k \times n}$ for $\Cmult$ can be obtained as the image of $\evword_{\alphaVec}$ on the monomial-like basis of $\evpolys$; that is, row $j$ of $\mG$ equals $\ev{x^j}{\alphaVec}$ if $j \notin H$, while row $h_i$ equals $\ev{x^{h_i} + \eta_i x^{k-1+t_i}}{\alphaVec}$.
Recall that $\Cmult$ is MDS if and only if any $k$ columns of $\mG$ are linearly independent.

\begin{theorem}\label{thm:MDS_sufficient_condition}
Let $s_0,\dots,s_\ell \in \NN$ such that $\Fsi{0} \subsetneq \Fsi{1} \subsetneq \dots \subsetneq \Fsi{\ell} = \Fq$ is a chain of subfields.
Let $k < n \leq s_0$ and $\alpha_1,\dots,\alpha_n \in \Fsi{0}$ be distinct, and let $\tVec$, $\hVec$, and $\etaVec$ be chosen as in \cref{ssec:definition} and such that $\eta_i \in \Fsi{i} \setminus \Fsi{i-1}$ for $i=1,\dots,\ell$.
Then $\Cmult$ is MDS.
\end{theorem}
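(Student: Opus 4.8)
The plan is to show that every $k \times k$ submatrix of the generator matrix $\mG$ is nonsingular by arguing that its determinant cannot be zero. Fix a choice of $k$ columns, indexed by a set $J \subseteq \{1,\dots,n\}$ with $|J| = k$, and let $\mG_J$ be the corresponding square submatrix. Since all rows not indexed by $H$ are of the form $\ev{x^j}{\alphaVec}$ and the twisted rows are $\ev{x^{h_i} + \eta_i x^{k-1+t_i}}{\alphaVec}$, by multilinearity of the determinant we can expand $\det \mG_J$ along the $\ell$ twisted rows. This yields a sum over subsets $S \subseteq \{1,\dots,\ell\}$ of terms of the shape $\left(\prod_{i \in S} \eta_i\right) \cdot M_S$, where $M_S$ is (up to sign) the determinant of a Vandermonde-like matrix on the points $\alphaVec_J$ whose exponent set is $\{0,\dots,k-1\} \setminus \{h_i : i \in S\} \;\cup\; \{k-1+t_i : i \in S\}$. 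Each such $M_S$ is a generalized Vandermonde determinant (a Schur polynomial up to the standard Vandermonde factor) in the entries of $\alphaVec_J$, hence lies in $\Fsi{0}$.

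The key step is the following "triangularity over the subfield chain" argument: order the subsets $S$ so that the coefficient monomial $\prod_{i\in S}\eta_i$ is graded by the largest index it contains, and use that $\eta_i \in \Fsi{i} \setminus \Fsi{i-1}$. Concretely, I would argue by induction on $\ell$: suppose $\det \mG_J = 0$. Collect all terms whose index set $S$ contains $\ell$; their sum is $\eta_\ell \cdot c_\ell$ for some $c_\ell \in \Fsi{\ell-1}$ (because the remaining $\eta_i$ with $i < \ell$ all lie in $\Fsi{\ell-1}$, and the Vandermonde minors lie in $\Fsi{0} \subseteq \Fsi{\ell-1}$). The terms with $\ell \notin S$ sum to some $d \in \Fsi{\ell-1}$. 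Then $\eta_\ell c_\ell + d = 0$ with $c_\ell, d \in \Fsi{\ell-1}$ and $\eta_\ell \notin \Fsi{\ell-1}$ forces $c_\ell = 0$ and $d = 0$. The vanishing of $d$ is exactly the statement that the analogous determinant for the $(\ell-1)$-twisted code (obtained by deleting twist $\ell$) is zero, so by induction all of its Vandermonde minors must be handled — and the base case $\ell = 0$ is the classical fact that a Vandermonde determinant on distinct points $\alpha_i$ is nonzero. The vanishing of $c_\ell$ needs a little more care: $c_\ell$ is itself a signed sum over subsets $S' \subseteq \{1,\dots,\ell-1\}$ of $\left(\prod_{i\in S'}\eta_i\right)$ times Vandermonde minors on $\alphaVec_J$ with exponent set modified at $h_\ell$ and $k-1+t_\ell$ as well; peeling off $\eta_{\ell-1}$ and iterating the same subfield argument drives every individual Vandermonde minor appearing anywhere in the expansion to zero.

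Thus it suffices to show that at least one generalized Vandermonde minor $M_S$ in the expansion is nonzero. The cleanest choice is $S = \emptyset$: then $M_\emptyset$ is the ordinary Vandermonde determinant $\prod_{a < b,\; a,b \in J}(\alpha_b - \alpha_a)$, which is nonzero since the $\alpha_i$ are distinct. This single nonvanishing minor already contradicts the forced-vanishing conclusion (it appears as a summand of $d$ at the top level and survives all the way down the induction as the unique term with empty index set), so $\det \mG_J \neq 0$. As $J$ was arbitrary, every $k$ columns of $\mG$ are independent, hence $\Cmult$ is MDS.

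I expect the main obstacle to be bookkeeping rather than depth: one must verify that the exponent multiset of each $M_S$ stays within $\{0,\dots,n-1\}$ and has no repeats (so $M_S$ is a genuine, possibly-zero generalized Vandermonde determinant and not automatically zero for trivial reasons) — this uses precisely the constraints $1 \le t_i \le n-k$, $0 \le h_i < k$, the $t_i$ distinct, and the $h_i$ distinct from Definition~\ref{def:multi_twisted_polynomials} — and that the subfield-chain peeling is organized so that at each stage the "coefficient in $\Fsi{i-1}$" is genuinely a polynomial expression over $\Fsi{0}$ in the fixed $\alphaVec$ and the lower $\eta$'s, independent of $\eta_i,\dots,\eta_\ell$. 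Once the induction is set up with the grading "largest index present in $S$," the field-theoretic step $\eta_i x + y = 0,\ x,y \in \Fsi{i-1},\ \eta_i \notin \Fsi{i-1} \Rightarrow x = y = 0$ does all the real work.
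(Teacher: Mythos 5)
Your proposal is correct and follows essentially the same route as the paper: expand $\det \mG_J$ multilinearly over the twisted rows into terms $\bigl(\prod_{i\in S}\eta_i\bigr)M_S$ and peel off $\eta_\ell,\eta_{\ell-1},\dots,\eta_1$ using the subfield chain to force every generalized Vandermonde minor $M_S$ to vanish, which is impossible. Your explicit observation that $M_\emptyset$ is an ordinary Vandermonde determinant on distinct points (hence nonzero) is, if anything, a slightly more careful way to land the final contradiction than the paper's closing sentence.
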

\begin{proof}
  We will prove that any $k \times k$ sub-matrix of $\mG$ is invertible, where $\mG$ is the generator matrix defined above.
  Consider an arbitrary such sub-matrix $\hat\mG$.
  Note that $\det \hat \mG$ will equal a sum of terms of the form $\prod_{i \notin H} \alpha_{\sigma(i)}^i \prod_{j = 1}^\ell(\alpha_{\sigma(h_j)}^{h_j} + \eta_j\alpha_{\sigma(h_j)}^{k-1+t_j})$ ranging over all $k$-permutations $\sigma$.
  In particular, we can write $\det \hat \mG = \eta_\ell u_\ell + v_\ell$ where $u_\ell, v_\ell \in \Fsi{0}[\eta_1,\ldots,\eta_{\ell-1}] \subseteq \Fsi{\ell-1}$ and both of total degree at most 1 in the $\eta_1,\ldots,\eta_{\ell-1}$.
  Since $\eta_\ell \notin \Fsi{\ell-1}$ then $\det \hat \mG = 0$ only if $u_\ell = v_\ell = 0$.
  We can continue similarly and write both $u_\ell$ and $v_\ell$ as linear expressions in $\eta_{\ell-1}$ over $\Fsi{0}[\eta_1,\ldots,\eta_{\ell-2}]$ and conclude that their coefficients all needs to be identically zero for $u_\ell = v_\ell = 0$.
  Continuing with the remaining $\eta_i$, we finally conclude that $\det \hat \mG = 0$ is only possible if a set of linear expressions of the form $\eta_1 u_1 + v_1$ with $u_1,v_1 \in \Fsi{0}$ are all identically 0 -- which none of them can be.
  Thus $\hat\mG$ is invertible, and since $\hat\mG$ was chosen arbitrarily.
\end{proof}

This construction gives quite short codes: if we use $\ell$ twists, then the field size $q$ has $q \geq n^{2^\ell}$, where $n$ is the length.

\subsection{Decoding by brute-forcing the twists}\label{ssec:decoding}

A simple decoding strategy for $\Code = \Cmult$ with $\ell$ twists is to guess $\ell$ coefficients $g_1,\ldots,g_\ell \in \Fq$ and then decode $\vec r - \ev{\sum_{i=1}^\ell g_i \eta_i X^{t_i + k-1}}{\alphaVec}$ as if it is a codeword in the $[n,k]$ RS code with evaluation points $\alphaVec$.
This will succeed when $g_i = f_{h_i}$ for $i = 1,\ldots,\ell$, where $\ev{f}{\alphaVec}$ is the sent codeword and $f \in \evpolys$.
This requires $q^\ell \geq n^{\ell 2^\ell}$ rounds of RS decoding to succeed.
Some of these rounds might output purported message polynomials $\hat f_0 + \ldots + \hat f_{k-1}x^{k-1}$ for which $g_i \neq \hat f_{h_i}$: these will not correspond to close codewords in the twisted RS code and should be sifted away.

This is a $\tau$-error correcting decoder for $\Code$ when using a $\tau$-error correcting RS decoder -- even if the minimum distance of the twisted RS code is much lower than $n-k+1$.
If $\tau \leq \lfloor\frac{d-1} 2 \rfloor$, then decoding succeeds for exactly one guess of the $g_i$.
Similarly, if we use an RS list-decoder, e.g.~the Guruswami--Sudan algorithm \cite{guruswami_improved_1999}, then the total output list size will be bounded by the Johnson bound \cite{johnson_new_1962} or the stronger Cassuto--Bruck bound \cite{cassuto_combinatorial_2004}.

The complexity will be $q^\ell$ times the cost of the RS decoder.
The current best complexity for half-the-minimum distance RS decoding is $O(n \log^2 n\log\log n)$ operations in $\Fq$ \cite{justesen_complexity_1976}, while list-decoding up to the Johnson radius using the Guruswami--Sudan algorithm has a Las Vegas randomized algorithm in $O(m^4 n(\log^2(mn) + \log(q))\log\log(mn))$ operations in $\Fq$ \cite{chowdhury_faster_2015,neiger_fast_2017}, where $m$ is the ``list size'' parameter of the decoder.
In either case, the twisted RS decoder then has a complexity of $\Osoft(n^{\ell 2^\ell + 1}4^\ell)$ bit operations, ignoring $\ell \log(n)$-factors, considering $m$ a constant, and assuming $q \in O(n^{2^\ell})$.
$m$ can be considered a constant if one takes $\tau = (1-\epsilon)J_{n,k}$ for some constant $\epsilon < 1$, where $J_{n,k} = n - \sqrt{n(k-1)}$ is the Johnson radius.

Hence, decoding is feasible only for a very small number of twists, e.g., $\ell=1,2$, and finding more efficient algorithms is an open problem.

\section{Structural properties}

\subsection{Duals}
\noindent
Twisted RS codes do not generally seem to be closed under duality; however, if we choose evaluation points which form a multiplicative group, then they are:

\begin{theorem}\label{thm:dual}
Let $\alpha_1,\dots,\alpha_n$ be a multiplicative subgroup of $\Fq^*$ and let $\Cmult$ be some twisted RS code with $\alphaVec=[\alpha_1,\dots,\alpha_n]$.
Then $\Cmult\dual$ equals $\Code^{n,n-k}(\alphaVec,k-\hVec,n-k-\tVec,-\etaVec)$ up to column multipliers.
\end{theorem}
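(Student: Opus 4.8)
The core strategy is to compute the dual of a twisted RS code by pairing it against an explicitly guessed generating set and showing the pairing vanishes. Write $n = |\alphaVec|$; since $\alpha_1,\dots,\alpha_n$ form a multiplicative group, they are exactly the $n$ distinct roots of $X^n - 1$, and the key algebraic fact is the classical summation identity $\sum_{i=1}^n \alpha_i^{\,m} = 0$ for all $m \not\equiv 0 \pmod n$ with $0 \le m < 2n$ (and $= n$ when $m \equiv 0$). The plan is: first, recall that for ordinary RS codes this identity shows $\ev{f}{\alphaVec} \cdot \ev{g}{\alphaVec} = 0$ whenever $\deg f + \deg g < n-1$, i.e.\ the standard fact $\mathrm{RS}[n,k]\dual = \mathrm{RS}[n,n-k]$ up to column multipliers (here the multipliers are all $1$ because of the group structure). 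Second, I would write down the candidate dual code $\Code' := \Code^{n,n-k}(\alphaVec, k - \hVec, n-k-\tVec, -\etaVec)$ and verify it is well-defined: the twists $t_i' := k - h_i$ satisfy $1 \le t_i' \le k$ — wait, we need $1 \le t_i' \le n - (n-k) = k$, which holds since $0 \le h_i < k$; and the hooks $h_i' := n-k-t_i$ satisfy $0 \le h_i' < n-k$ since $1 \le t_i \le n-k$. The distinctness of the $t_i$ and $h_i$ transfers to the $t_i'$ and $h_i'$.

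Third — the main computation — I would take a generic twisted polynomial $f = \sum_{i=0}^{k-1} f_i X^i + \sum_j \eta_j f_{h_j} X^{k-1+t_j} \in \evpolys$ and a generic polynomial $g = \sum_{i=0}^{n-k-1} g_i X^i + \sum_j (-\eta_j) g_{h_j'} X^{n-k-1+t_j'}$ from the candidate dual's polynomial space, and compute $\ev{f}{\alphaVec} \cdot \ev{g}{\alphaVec} = \sum_{m} c_m \sum_{i=1}^n \alpha_i^{m}$ where $c_m$ collects the coefficient of $X^m$ in the product $fg$. Every exponent $m$ appearing is at most $(k-1+\max t_j) + (n-k-1+\max t_j') \le (n-1) + $ something; I need to track that the only exponent that is $\equiv 0 \pmod n$ and lies in the relevant range is $m = n-1$?? — no: $\alpha_i$ has order dividing $n$, so $\sum_i \alpha_i^m \ne 0$ exactly when $n \mid m$. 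The pure RS-part products contribute exponents $< (k-1)+(n-k-1) = n-2 < n$, so they vanish. The cross terms between a twist monomial $\eta_j f_{h_j} X^{k-1+t_j}$ in $f$ and a low-degree monomial $g_s X^s$ in $g$ give exponent $k-1+t_j+s$; this hits a multiple of $n$ (namely $m=n$, so $s = n-k+1-t_j \ge 1$, valid) and the resulting term is $n \cdot \eta_j f_{h_j} g_{n-k+1-t_j}$. Symmetrically, the twist monomial $-\eta_j g_{h_j'} X^{n-k-1+t_j'}$ paired with $f_s X^s$ in $f$ contributes $-n\,\eta_j g_{h_j'} f_{n-(n-k-1+t_j')} = -n\,\eta_j g_{h_j'} f_{k+1-t_j'}$. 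Substituting $t_j' = k - h_j$ gives $f_{k+1-(k-h_j)} = f_{h_j+1}$... this index bookkeeping is exactly where I expect the definitions $t_i' = k - h_i$, $h_i' = n-k-t_i$ to be forced, and I would choose the matching so that the two cross-term families cancel in pairs: the term $+n\,\eta_j f_{h_j} g_{h_j'}$ (after checking $n-k+1-t_j = h_j'$, i.e.\ $h_j' = n-k+1-t_j$ — hmm, the definition says $h_j' = n-k-t_j$, off by one, so I must re-examine whether the relevant multiple of $n$ is $m=n$ or whether I should be more careful about which monomial in $g$ the twist of $f$ meets) cancels against $-n\,\eta_j g_{h_j'} f_{h_j}$. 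Finally, twist-times-twist cross terms give exponents $\ge k+1+n-k+1 = n+2$ and $< 2n$ generically — I must verify none of these is $\equiv 0 \pmod n$, which requires a short inequality argument using $t_j \le n-k$ and $t_j' \le k$.

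The conclusion then is that $\ev{f}{\alphaVec}$ is orthogonal to $\ev{g}{\alphaVec}$ for all such $f, g$, so $\Code' \subseteq \Cmult\dual$; a dimension count ($\dim \Code' = n-k = \dim \Cmult\dual$) upgrades this to equality, and the "up to column multipliers" accounts for the standard diagonal from the RS duality (trivial here, or absorbed). \textbf{The main obstacle} is the index arithmetic in the cross terms: getting the off-by-one exponents right so that the two families of cross terms cancel term-by-term, and simultaneously certifying that the twist-twist products never land on a multiple of $n$. I expect this to need the full strength of the constraints $1 \le t_i \le n-k$ and $0 \le h_i < k$ together with the group order being exactly $n$, and it is the step I would write out most carefully rather than wave through.
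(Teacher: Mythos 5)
Your strategy --- pairing a generic $f \in \evpolys$ directly against a generic $g$ in the candidate dual's polynomial space via the power sums $\sum_{i}\alpha_i^{\,m}$ --- is a genuinely different route from the paper, which argues entirely at the matrix level: it writes $\mG = [\mI \mid \mL]\cdot\Van$ for the $n\times n$ Vandermonde matrix $\Van$, invokes the explicit inverse $(\Van\transpose)^{-1} = \mJ\cdot\Van\cdot\diag(\alphaVec/n)$ valid when $\alphaVec$ is a multiplicative group, and checks $\mG\cdot\mH\transpose = \ve 0$ for the guessed parity-check matrix $\mH = [\mI \mid -\mJ\mL\transpose\mJ]\cdot\Van\cdot\diag(\alphaVec/n)$ by a one-line block multiplication, reading the dual's twists off the entries of $-\mJ\mL\transpose\mJ$. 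Your polynomial-level computation is more elementary and makes the twist/hook exchange transparent, but as written it contains a genuine error which is also the source of the off-by-one you flag and never resolve.

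The error is the assertion that the column multipliers are trivial. They are not: since $0\equiv 0\pmod n$, the unweighted pairing $\sum_i f(\alpha_i)g(\alpha_i)$ already picks up the nonzero term $nf_0g_0$ from $m=0$, so even the pure RS parts fail to be orthogonal; and your matching of the twist of $f$ against $g_sX^s$ at $m=n$ forces $s=n-k+1-t_j$, which disagrees with $h_j'=n-k-t_j$ and moreover falls out of range when $t_j=1$. The theorem's ``up to column multipliers'' is doing real work here: the multipliers are $\alpha_j/n$ (the $\diag(\alphaVec/n)$ in the paper's $\mH$), so the correct pairing is $\sum_i f(\alpha_i)g(\alpha_i)\alpha_i$ and the relevant condition becomes $n\mid m+1$, i.e.\ $m=n-1$ (the maximal exponent $2n-2$ rules out $m=2n-1$). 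With this shift everything closes: the pure RS parts have degree $\le n-2$ and contribute nothing; the twist $\eta_jf_{h_j}X^{k-1+t_j}$ meets $g_sX^s$ exactly at $s=n-k-t_j=h_j'$, giving $+n\,\eta_jf_{h_j}g_{h_j'}$, which cancels the $-n\,\eta_jg_{h_j'}f_{h_j}$ coming from the twist of $g$ meeting $X^{h_j}$; and the twist--twist exponents are $n-2+t_j+(k-h_{j'})$ with $t_j+(k-h_{j'})\in[2,n]$, hence never $\equiv n-1\pmod n$. After that correction your argument, together with the dimension count, is a complete and valid alternative proof.
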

\begin{proof}
Let $\mJ$ be the matrix with 1's on the anti-diagonal and zeroes elsewhere of suitable size.
Left-multiplying by $\mJ$ reverses rows while right-multiplying reverses columns.
We denote by $\Van$ the $n \times n$ Vandermonde matrix over $\alphaVec$, i.e.
\[
\Van := [ \alpha_i^{j-1} ]_{i=1,\ldots,n, \, j=1,\ldots,k} \ .
\]
Since the entries of $\alphaVec$ form a multiplicative group, we have $\alpha_i^n = 1$ for all $i$ and by \cite{althaus1969inverse}, we obtain
\begin{align*}
\big( \Van \transpose \big)^{-1}
  &= \mJ \cdot \Van \cdot \diag(\alphaVec/n)
\end{align*}
Similar to \cref{ssec:definition}, a generator matrix of $\Cmult$ is given by $\mG = [\mI \mid \mL] \cdot \Van$, where
\begin{align*}
\mL_{ij} &= \begin{cases}
\eta_\mu, & \text{if } (i,j) = (h_\mu+1,t_\mu) \\
0, & \text{else}.
\end{cases}
\end{align*}
We claim that a parity check matrix for $\Cmult$ is:
\begin{align*}
\mH = [\mI \mid -\mJ \mL\transpose \mJ ] \cdot \Van \cdot \diag(\alphaVec/n) \ .
\end{align*}
$\mH$ has rank $n-k$ so left is to show $\mG \cdot \mH\transpose = \ve 0$.
Note that
\[ \mH = \mJ [-\mL\transpose \mid \mI] \mJ \Van \diag(\alphaVec/n) = \mJ [-\mL\transpose \mid \mI](\Van^{-1})^T \ . 
\]
Therefore,
$
\mG \cdot \mH\transpose
= [\mI \mid \mL] \begin{bmatrix}
-\mL \\ \mI
\end{bmatrix} \mJ\transpose = \ve 0
$
and the entries of $-\mJ \mL\transpose \mJ$ are of the form
\begin{align*}
\resizebox{\columnwidth}{!}{
$(-\mJ \mL\transpose \mJ)[i,j] = \begin{cases}
-\eta_\mu, & (i,j) = (n-k-t_\mu+1,k-h_\mu) \\
0, & \text{else}.
\end{cases}$
}
\end{align*}
In other words, a twist $x^{h_\mu} + \eta_\mu x^{k-1 + t_\mu}$ becomes the twist $x^{n-k-t_\mu} + (-\eta_\mu) x^{(n-k-1) + (k-h_\mu)}$ in the dual code.
\end{proof}
Note that \cref{thm:dual} implies that the dual of a $(*)$-twisted code from \cite{beelen2017twisted} is equivalent to a $(*)$-twisted code as long as $0$ is not a coordinate of $\alphaVec.$

\subsection{Schur Squares}
\label{ssec:squares}

There has been a rising interest in the \emph{Schur product} of codes both as an independent object of study, but also due to its occurrence in applications of codes, see e.g.\cite{cramer_secure_2015,randriambololona2015products} and references therein.
In particular, it is key in the structural attacks on the McEliece cryptosystem using certain Goppa codes  or other RS-like codes \cite{couvreur2014distinguisher}.

\begin{definition}
The \emph{Schur product}, or  \emph{component-wise} product, of two vectors $\ve x,\ve y \in \Fq^n$ is
$\ve x \star \ve y := [x_1 \cdot y_1,\dots,x_n \cdot y_n]$.
The Schur product of two linear codes $\Code_1, \Code_2 \subseteq \Fq^n$ is the set
\begin{align*}
  \Code_1 \star \Code_2 := \langle \ve c_1 \star \ve c_2 \, \mid \ve c_1 \in \Code_1, \ve c_2 \in \Code_2 \rangle_{\Fq} \ .
\end{align*}
We write $\Code^{2} := \Code \star \Code$ for the Schur square code.
\end{definition}

An $[n,k]$ code $\Code$ fulfills $ \dim \Code^2 \leq \min\{n, \tfrac 1 2 k (k-1)\}$.
Furthermore, if $\Code$ is MDS then $\dim \Code^2 \geq \min\{2k - 1, n\}$ \cite[p.31]{randriambololona2015products}.
GRS codes attain this lower bound with equality.

Since twisted RS codes are obtained by evaluating specific polynomials, we show below how to obtain a lower bound for the dimension of its Schur square.
For a given evaluation vector $\alphaVec$, we denote for a polynomial $f \in \Fq[X]$ by $\overline{f}$ the polynomial remainder of $f$ modulo $\prod_{i=1}^n(X-\alpha_i).$
\begin{theorem}\label{thm:square_code_dimension_lower_bound}
Let $\overline{D}=\{\deg (\overline{f\cdot g}) \mid f,g \in \evpolys\}$. Then $\dim \Cmult^2 \ge \left|\overline{D} \right|.$
\end{theorem}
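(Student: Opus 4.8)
The plan is to exhibit, for every degree $\delta \in \overline{D}$, a codeword in $\Cmult^2$ whose ``leading behaviour'' is controlled by $\delta$, and then argue that such codewords are linearly independent. First I would fix the evaluation map $\ev{\cdot}{\alphaVec}$ and recall that it factors through reduction modulo $\pi(X) := \prod_{i=1}^n (X - \alpha_i)$: since $n$ evaluation points determine a polynomial of degree $< n$ uniquely, we have $\ev{f}{\alphaVec} = \ev{\overline f}{\alphaVec}$ for every $f$, and $\ev{\cdot}{\alphaVec}$ is a bijection from $\Fq[X]_{<n}$ onto $\Fq^n$. Consequently, for $f, g \in \evpolys$ the product codeword satisfies $\ev{f}{\alphaVec} \star \ev{g}{\alphaVec} = \ev{fg}{\alphaVec} = \ev{\overline{fg}}{\alphaVec}$, so $\Cmult^2 = \ev{W}{\alphaVec}$ where $W := \langle \overline{fg} \mid f,g \in \evpolys \rangle_{\Fq} \subseteq \Fq[X]_{<n}$, and hence $\dim \Cmult^2 = \dim W$ because $\ev{\cdot}{\alphaVec}$ is injective on $\Fq[X]_{<n}$.

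Next I would bound $\dim W$ from below by $|\overline D|$. Enumerate $\overline D = \{\delta_1 > \delta_2 > \cdots > \delta_m\}$ where $m = |\overline D|$. For each $\delta_r$ pick $f^{(r)}, g^{(r)} \in \evpolys$ with $\deg(\overline{f^{(r)} g^{(r)}}) = \delta_r$, and set $p_r := \overline{f^{(r)} g^{(r)}} \in W$. Each $p_r$ has degree exactly $\delta_r$, and the $\delta_r$ are pairwise distinct; a finite set of polynomials with pairwise distinct degrees is linearly independent over $\Fq$ (a nontrivial vanishing combination would force the highest-degree term, with nonzero coefficient, to vanish). Therefore $p_1, \dots, p_m$ are linearly independent elements of $W$, so $\dim W \ge m = |\overline D|$, and combining with the previous paragraph gives $\dim \Cmult^2 \ge |\overline D|$.

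The only subtle point — and the step I would be most careful about — is the identification $\Cmult^2 = \ev{W}{\alphaVec}$. The containment $\ev{W}{\alphaVec} \subseteq \Cmult^2$ is immediate since each generator $\ev{\overline{fg}}{\alphaVec} = \ev{f}{\alphaVec}\star\ev{g}{\alphaVec}$ lies in $\Cmult^2$; the reverse containment holds because $\Cmult^2$ is by definition the $\Fq$-span of the products $\ev{f}{\alphaVec}\star\ev{g}{\alphaVec}$, each of which equals $\ev{\overline{fg}}{\alphaVec} \in \ev{W}{\alphaVec}$, and $\ev{W}{\alphaVec}$ is an $\Fq$-subspace. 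Everything else is routine, and no restriction on the $t_i, h_i, \eta_i$ beyond the standing assumptions of \cref{ssec:definition} is needed, since reduction modulo $\pi$ is $\Fq$-linear and degree-nonincreasing; in fact the bound holds for the Schur square of any evaluation code, with $\evpolys$ replaced by the relevant polynomial space.
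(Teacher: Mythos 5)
Your proposal is correct and follows essentially the same route as the paper's proof: both reduce products modulo $\prod_{i=1}^n(X-\alpha_i)$, use injectivity of $\ev{\cdot}{\alphaVec}$ on $\Fq[X]_{<n}$ to identify $\dim \Cmult^2$ with the dimension of the span of the reduced products, and then lower-bound that by the number of distinct degrees occurring. You merely spell out the linear-independence-of-distinct-degrees step that the paper leaves implicit.
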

\begin{proof}
First of all note that $\ev{f}{\alphaVec}=\ev{\overline{f}}{\alphaVec}$, since the evaluation map $\ev{\cdot}{\alphaVec}$ vanishes on any multiple of $\prod_{i=1}^n(X-\alpha_i).$ Further, the evaluation map $\ev{\cdot}{\alphaVec}$ is an injective homomorphism on $\Fq[X]_{<n}$, the space of polynomials of degree at most $n-1$. This implies that $$\dim \Cmult^2 = \dim \langle \overline{f \cdot g} \mid f,g \in \evpolys \rangle \ge \left|\overline{D} \right|. \IEEEQEDhere$$
\end{proof}
Especially if the evaluation vector consists of elements of a multiplicative subgroup of $\Fq^*$, the bound in the theorem is easy to compute, since in that case $\prod_{i=1}^n(X-\alpha_i)=X^n-1$.

Alternatively, the bound implies the following simpler formulation for which no remainders need be computed:
\begin{corollary}\label{cor:D}
Let $D=\{\deg (f\cdot g) \mid f,g \in \evpolys\}$. Then, $$\dim \Cmult^2 \ge \left|\{d \in D \mid d<n \}\right|.$$
\end{corollary}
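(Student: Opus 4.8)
The plan is to deduce this immediately from \cref{thm:square_code_dimension_lower_bound}, which already gives $\dim \Cmult^2 \ge |\overline{D}|$ with $\overline{D}=\{\deg(\overline{f\cdot g}) \mid f,g \in \evpolys\}$. So it suffices to prove the set inclusion $\{d \in D \mid d<n\} \subseteq \overline{D}$; combining this with the theorem then yields $|\{d \in D \mid d<n\}| \le |\overline{D}| \le \dim \Cmult^2$, which is exactly the claim.

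To establish the inclusion, first I would recall that $\overline{f}$ is the remainder of $f$ modulo $M := \prod_{i=1}^n(X-\alpha_i)$, and that $\deg M = n$ since the $\alpha_i$ are distinct. The key elementary observation is therefore: any polynomial $p \in \Fq[X]$ with $\deg p < n$ is already reduced, i.e.\ $\overline{p}=p$, and in particular $\deg \overline{p} = \deg p$. Now take an arbitrary $d \in D$ with $d<n$. By definition of $D$ there exist $f,g \in \evpolys$ with $\deg(f\cdot g)=d$, and since $d<n$ the observation applied to $p=f\cdot g$ gives $\deg(\overline{f\cdot g})=\deg(f\cdot g)=d$. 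Hence $d \in \overline{D}$, proving the inclusion.

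There is essentially no obstacle here: the corollary is just \cref{thm:square_code_dimension_lower_bound} specialised to the regime in which the reduction modulo $M$ acts trivially, so that the (possibly laborious) computation of polynomial remainders can be avoided. The only point worth a remark is the degenerate case $f\cdot g = 0$ (one factor being the zero polynomial), where one uses the convention $\deg 0 = -\infty$; then $\overline{f\cdot g}=0$ as well, so the argument is unaffected, and in any event such a term contributes nothing beyond what the constant polynomials already provide.
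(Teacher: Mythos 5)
Your proposal is correct and follows exactly the paper's route: the paper's proof is the one-line observation that $\{d \in D \mid d<n\} \subset \overline{D}$ combined with \cref{thm:square_code_dimension_lower_bound}, and you have simply spelled out why the inclusion holds (reduction modulo $\prod_{i=1}^n(X-\alpha_i)$ is trivial on polynomials of degree below $n$). No gaps.
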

\begin{proof}
This follows from $\{d \in D \mid d<n \} \subset \overline{D}.$
\end{proof}
This corollary is usually easy to apply: let $S$ be the degrees occurring in the usual monomial-like  basis of $\evpolys$, i.e.:
$
S = (\{0,\ldots,k-1\} \setminus \{ h_i \}_{i=1,\ldots,\ell }) \cup \{ t_i + k-1 \}_{i=1,\ldots,\ell} .
$
Then $D = \{ d_1 + d_2 \mid d_1, d_2 \in S \} \cap \{ 0,\ldots, n-1 \}$.

\subsection{Separation from GRS codes}
\label{ssec:separation}

\def\Couter{\mathcal{D}_{\textsc{out}}}
\def\Cinner{\mathcal{D}_{\textsc{inn}}}
Consider a twisted RS code $\Code$.
Since twisted RS codes are close kin to GRS codes, one can consider the question of finding GRS codes $\Cinner, \Couter$ such that
\[
  \Cinner \subset \Code \subset \Couter \ ,
\]
while maximising $\dim \Cinner$ and minimising $\dim \Couter$, i.e.~the ``separation'' of $\Code$ from GRS codes.
Note that $\Code \subset \Couter \iff (\Couter)^\perp \subset \Code^\perp$ and that $(\Couter)^\perp$ is a GRS code itself where we now try to maximise $\dim (\Couter)^\perp$.
When applying the codes to cryptography, we will motivate further why we consider these questions.
The following three statements separate any code $\Code$ from GRS codes from above and below based entirely on the dimension of $\Code^2$.
Since many twisted RS codes have large Schur square, this provides separation from GRS codes.

\begin{lemma}\label{lem:GRSapprox}
Let $\Cmult$ be a twisted RS code. Then $\Cmult$ contains an RS code of dimension $\min_i\{h_i\}$ and is contained in an RS code of dimension $k+\max_i{t_i}.$
\end{lemma}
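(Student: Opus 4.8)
The plan is to exhibit the two containments directly by comparing the defining spaces of polynomials. Recall that $\Cmult = \ev{\evpolys}{\alphaVec}$ and that an $[n,m]$ RS code with the same evaluation vector $\alphaVec$ is $\ev{\Fq[X]_{<m}}{\vec\alpha}$. Since the evaluation map $\ev{\cdot}{\alphaVec}$ is a fixed linear map, a containment of polynomial spaces immediately yields the corresponding containment of codes; so it suffices to work at the level of polynomials.

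For the lower bound, set $m = \min_i\{h_i\}$ and consider a polynomial $f = \sum_{i=0}^{k-1} f_i X^i \in \Fq[X]_{<m}$. Since every exponent appearing in $f$ is strictly smaller than every hook $h_j$, we have $f_{h_j} = 0$ for all $j = 1,\ldots,\ell$, so all twist terms $\eta_j f_{h_j} X^{k-1+t_j}$ vanish and $f$ already lies in $\evpolys$. Hence $\Fq[X]_{<m} \subseteq \evpolys$, which gives an RS subcode of dimension $m = \min_i\{h_i\}$ inside $\Cmult$ (the dimension is exactly $m$ because $m \le k-1 < n$ and $\ev{\cdot}{\alphaVec}$ is injective on $\Fq[X]_{<n}$).

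For the upper bound, set $M = k + \max_i\{t_i\}$ and take any $f \in \evpolys$. By definition $f = \sum_{i=0}^{k-1} f_i X^i + \sum_{j=1}^\ell \eta_j f_{h_j} X^{k-1+t_j}$, and every exponent occurring here is at most $k-1+\max_i\{t_i\} = M-1$, so $f \in \Fq[X]_{<M}$. Thus $\evpolys \subseteq \Fq[X]_{<M}$, and applying $\ev{\cdot}{\alphaVec}$ shows $\Cmult$ is contained in the RS code $\ev{\Fq[X]_{<M}}{\alphaVec}$. This RS code has dimension $\min\{M, n\} = M$ when $M \le n$; since the constraints in \cref{ssec:definition} only force $t_i \le n-k$, we have $M = k + \max_i\{t_i\} \le n$, so the containing RS code indeed has dimension $k + \max_i\{t_i\}$.

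There is essentially no obstacle here: the statement is a bookkeeping argument about which monomial degrees appear, and the only mild subtlety is confirming that the claimed dimensions are attained rather than merely upper-bounded by $\min\{\cdot, n\}$, which follows from the parameter ranges in \cref{def:multi_twisted_polynomials} together with injectivity of evaluation on $\Fq[X]_{<n}$.
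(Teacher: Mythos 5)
Your proof is correct and follows essentially the same route as the paper: the paper's proof is exactly the observation that the monomials $x^0,\dots,x^{\min_i\{h_i\}-1}$ lie in $\evpolys$ and that $\evpolys$ sits inside the span of $x^0,\dots,x^{k-1+\max_i\{t_i\}}$, then applies the evaluation map. Your version just spells out the same containments with the additional (correct) bookkeeping that the claimed dimensions are attained.
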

\begin{proof}
  This follows from the observations that
  \begin{align*}
    \{x^j \mid 0 \le j \le \min_i\{h_i\}-1\} \subset \evpolys \subset \qquad \\
    \qquad \qquad \{x^j \mid 0 \le j \le k-1+\max_i\{t_i\}\} \ . \quad \IEEEQEDhere
  \end{align*}
\end{proof}
The description in the preceding section on the Schur square $\dim \Code^2$ of a twisted RS code $\Code$ gives a way to prove $\Code$ is non-GRS: if $\dim \Code^2 > 2k-1$, or if $\dim(\Code\dual)^2 > 2(n-k)-1$, then it must be non-GRS.
This can be strengthened to bound the dimension of the smallest GRS code which contains $\Code$ or $\Code\dual$ supplementing the observation from Lemma \ref{lem:GRSapprox}.

\begin{proposition}\label{prop:dim_Couter_k<n/2}
  Let $\Code$ be an $[n,k]$ code with $k < n/2$ and such that $\dim \Code^2 = 2k-1 + \delta$ for $\delta > 0$.
  Then if $\Couter$ is a GRS code with $\Code \subseteq \Couter$, then $\dim \Couter \geq k + \delta/2$.
\end{proposition}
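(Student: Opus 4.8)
The plan is to exploit the monotonicity of the Schur square under code inclusion, together with the exact value $\dim \Couter^2 = 2\dim\Couter - 1$ enjoyed by GRS codes. First I would observe that if $\Code \subseteq \Couter$ then $\Code^2 \subseteq \Couter^2$, simply because every generator $\vec c_1 \star \vec c_2$ of $\Code^2$ with $\vec c_1, \vec c_2 \in \Code$ is also a generator of $\Couter^2$. Hence $\dim \Code^2 \leq \dim \Couter^2$. Writing $k' = \dim \Couter$, and using that a GRS code satisfies $\dim (\Couter)^2 = \min\{2k'-1, n\}$, we get $2k-1+\delta \leq \min\{2k'-1,n\}$.

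The next step is to dispose of the $\min$: since $k < n/2$ we have $2k-1 < n$, and I would argue that $k' \leq $ something forcing $2k'-1 \leq n$ as well, or simply split into the two cases. If $2k'-1 \leq n$, then $2k-1+\delta \leq 2k'-1$ gives $k' \geq k + \delta/2$ directly. If instead $2k'-1 > n$, i.e.\ $k' > (n+1)/2 > k$, then since $\delta \leq \dim \Code^2 - (2k-1) \leq n - (2k-1)$ we have $k + \delta/2 \leq k + (n+1-2k)/2 = (n+1)/2 < k'$, so the bound $\dim \Couter \geq k + \delta/2$ holds in this case too. Combining the two cases yields the claim.

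I do not expect any serious obstacle here: the argument is essentially the contrapositive of ``large Schur square $\Rightarrow$ not contained in a small GRS code'', made quantitative. The one point requiring a little care is the handling of the $\min\{2k'-1,n\}$ in the GRS Schur-square dimension formula — one must check that the conclusion $\dim\Couter \geq k+\delta/2$ survives even when $\Couter$ is large enough that its Schur square fills all of $\Fq^n$, which is why the hypothesis $k < n/2$ (ensuring $\delta \leq n-2k+1$) is used. Everything else is the inclusion $\Code^2 \subseteq \Couter^2$ and elementary arithmetic.
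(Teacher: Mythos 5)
Your argument is correct and is essentially the paper's own proof: both rest on the inclusion $\Code^2 \subseteq \Couter^2$ and the GRS Schur-square dimension formula. Your extra case analysis for $2\dim\Couter - 1 > n$ is a point the paper's one-line proof glosses over, and handling it explicitly (via $\delta \leq n-2k+1$) is a welcome refinement rather than a departure.
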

\begin{proof}
  Since $\Code^2 \subseteq \Couter^2$ then $\dim(\Couter^2) \geq 2k-1 + \delta$ i.e.~$\dim(\Couter) \geq k + \delta/2$.
\end{proof}

\begin{proposition}
  \label{prop:dim_Cinner}
  Let $\Code$ be an $[n,k]$ code with $k < n/2$ and such that $\dim \Code^2 = 2k-1 + \delta$ for $\delta > 0$.
  If $\Cinner$ is a GRS code with $\Cinner \subseteq \Code$, then
    $\dim \Cinner \leq  \sqrt{\left(k-\tfrac52\right)^2-2\delta} + \tfrac 5 2$.
\end{proposition}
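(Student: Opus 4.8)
The plan is to split $\Code$ over its GRS subcode $\Cinner$ and bound the three resulting blocks of the Schur square separately. Set $m=\dim\Cinner$ and $r=k-m$; since $\delta>0$ the code $\Code$ is not GRS, hence $\Cinner\subsetneq\Code$ and $r\ge 1$. Fix any vector-space complement, $\Code=\Cinner\oplus W$ with $\dim W=r$. Because every element of $\Code$ is a sum of an element of $\Cinner$ and an element of $W$, bilinearity of $\star$ gives
\[
\Code^2=\Cinner^2+(\Cinner\star W)+W^2 ,
\]
and therefore $\dim\Code^2\le\dim\Cinner^2+\dim(\Cinner\star W)+\dim W^2$.

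Next I would estimate the three summands. As $\Cinner$ is a GRS code of dimension $m$ and $2m-1\le 2k-1<n$, its Schur square has the smallest dimension possible for an MDS code, namely $\dim\Cinner^2=2m-1$. The subspace $\Cinner\star W$ is spanned by the $mr$ componentwise products of a basis of $\Cinner$ with a basis of $W$, so $\dim(\Cinner\star W)\le mr$. Finally, the general upper bound on the dimension of a Schur square applied to $W$ yields $\dim W^2\le\tfrac12 r(r-1)$. Plugging these into the inequality above, and using $\dim\Code^2=2k-1+\delta$,
\[
2k-1+\delta\ \le\ (2m-1)+mr+\tfrac12 r(r-1).
\]

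It remains to unwind this. Using $2m-2k=-2r$ it rearranges to $\delta\le r(m-2)+\tfrac12 r(r-1)=\tfrac12 r(2m+r-5)=\tfrac12 (k-m)(k+m-5)$, i.e.\ $2\delta\le (k-m)(k+m-5)=(k-\tfrac52)^2-(m-\tfrac52)^2$. Hence $(m-\tfrac52)^2\le(k-\tfrac52)^2-2\delta$, and taking square roots (the right-hand side is non-negative, being at least $(m-\tfrac52)^2$) gives $m-\tfrac52\le\sqrt{(k-\tfrac52)^2-2\delta}$, which is the assertion. There is no genuinely hard step here: the whole content is the observation that feeding the GRS structure of $\Cinner$ into the single block $\Cinner^2$ drops its contribution from the generic $\approx\tfrac12 m^2$ down to $2m-1$, and that this is exactly the slack needed for the three crude bounds to sum to the claimed value; the only point to watch is that a non-GRS $m$-dimensional subcode would give only $\dim\Cinner^2\le\tfrac12 m(m-1)$, which is too weak once $m\ge 5$.
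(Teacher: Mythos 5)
Your proof is essentially the paper's own: the same extension of a basis of the GRS subcode to a basis of $\Code$, the same three block bounds $2m-1$, $mr$ and $\tfrac12 r(r-1)$ for the pieces of $\Code^2$, and the same completion of the square to reach the stated inequality. (One shared caveat: both you and the paper count the spanning products of $W^2$ as $\tfrac12 r(r-1)$ rather than $\binom{r+1}{2}=\tfrac12 r(r+1)$, omitting the diagonal terms $\vec c_i\star\vec c_i$ with $i>m$ --- an off-by-$r$ count inherited from the general bound $\dim\Code^2\le\tfrac12 k(k-1)$ as stated in the paper --- so the two arguments agree even on this point; with the corrected count the same computation yields the slightly weaker $\dim$ bound $\sqrt{(k-\tfrac32)^2-2\delta}+\tfrac32$.)
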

\begin{proof}
  Let $K = \dim \Cinner$.
  Pick a basis $\ve c_1,\ldots,\ve c_K$ of $\Cinner$ and extend this basis with $k-K$ vectors $\vec c_{K+1},\ldots,\vec c_{k}$ to form a basis for $\Code$.
  Then $\Code^2$ must be spanned by a basis of $\Cinner^2$ together with all Schur products $\ve c_i \star \ve c_j$ with $1 \leq i \leq K < j \leq k$, together with all Schur products $\ve c_i \star \ve c_j$ with $K < i, j \leq k$.
  In total
  \[
    \dim \Code^2 \leq (2K-1) + K(k-K) + \tfrac 1 2 (k-K) (k-K-1) \ .
  \]
This implies $\left(K-\tfrac52\right)^2 \leq \left(k-\tfrac52\right)^2-2\delta$.
\end{proof}

\begin{remark}
  A simpler question than separation from GRS codes is simply inequivalence to GRS codes.
  We considered this question for single-twisted RS codes in \cite{beelen2017twisted}, using different tools; unfortunately the proof of \cite[Theorem~18]{beelen2017twisted} contains a mistake and its statement is not true in general.
  However, the Schur square arguments above show that still most twisted RS codes are not GRS codes.
\end{remark}

\section{Applied to the McEliece Cryptosystem}

In this section, we present a subfamily of twisted RS codes which provably resist the known structural attacks on RS-like codes by Sidelnikov--Shestakov \cite{sidelnikov1992insecurity} and Couvreur et al.~\cite{couvreur2014distinguisher}.
We also discuss three attacks by Wieschebrink: that of \cite{wieschebrink2006attack} does not apply, and the same attack on the dual code \emph{seems} to not apply.
Thirdly, the squaring attack \cite{wieschebrink2010cryptanalysis} also \emph{seems} to not apply.

\begin{definition}
  \label{def:cryptofamily}
  Choose positive integers $k < n \leq q_0-1$ with $q_0$ a prime power and $2\sqrt n+6 < k \leq \tfrac{n}{2}-2$.
  Furthermore, choose $\ell \in \NN$ such that $\frac {n+1} {k - \sqrt n} - 2 < \ell < \min\{k+1;\ \frac{2n} k - 2;\ \sqrt n - 4 \}$.
  Let $r := \lceil \tfrac{n+1}{\ell+2} \rceil+2$, let $q = q_0^{2^\ell}$, and for $i=1,\dots,\ell$, let
  \begin{equation*}
  t_i = (i+1)(r-2) -k+2 \quad \text{and} \quad h_i = r-1+i.
  \end{equation*}
  Then the family of codes $\Family$ is the set of all codes $\Cmult$, where $\alphaVec \in (\FF_{q_0} \setminus \{0\})^n$ with distinct elements, and where $\eta_i \in \FF_{q_0^{2^i}} \setminus \FF_{q_0^{2^{i-1}}}$.
  Further, for $n \mid q_0-1$, we let $\FamilyMult \subset \Family$ be those codes which furthermore satisfy that $\alphaVec$ form a multiplicative subgroup of $\FF_{q_0}$.
\end{definition}

It is technical but easy to show that for any $n \geq 72$ and any $k$ satisfying the restrictions, then there is always a valid choice for $\ell$.
Also, the family $\Family$ is well-defined (cf.~\cref{def:multi_twisted_polynomials}) for all triples $(n,k,\ell)$ satisfying the restrictions.

By \cref{thm:MDS_sufficient_condition} all codes in $\Family$ are MDS.
It is clear that for a fixed rate $R = k/n$, the lower bound for $\ell$ in \cref{def:cryptofamily} converges to $\tfrac 1 R - 1$ for $n \to \infty$, so for large enough $n$, it suffices to choose $\ell = \lfloor\tfrac{1}{R}\rfloor$.

\begin{remark}
  If one desires a code rate greater than $\tfrac{1}{2}$, for $n \mid q_0-1$ one can use the dual of a code in $\FamilyMult$. 
  These will be twisted RS codes with $\ell$ twists, by \cref{thm:dual}, and will resist the attacks since the primal codes do.
\end{remark}

\subsection{Resistance to Schur Square Distinguishing}\label{ssec:resistance_schur_square}

Couvreur et al.\ \cite{couvreur2014distinguisher} built a structural attack on McEliece with RS codes using the fact that the square code of a $k$-dimensional RS code $\mathcal{D}$, and any shortening at up to two positions, has abnormally low dimension when $k < n/2$:
$\dim (\mathcal{D}^2) = \min\{2k-1,n\}$.
By comparison, a random linear code achieves $\dim(\Code^2) = \min\{\tfrac{1}{2} k (k+1),n\}$ with high probability \cite{cascudo_squares_2015}.
In this section we show that the codes of $\Family$ are impervious to this attack.

\begin{theorem}\label{thm:distinguisher_resistant_construction}
  Let $n,k,\ell$ be as in \cref{def:cryptofamily}, and let $\Code \in \Family$.
  Then $\dim \Code^2 = n$ and $\dim((\Code^\perp)^2) = n$.
\end{theorem}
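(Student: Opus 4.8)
The plan is to apply \cref{cor:D} (and its dual version) and show that the set of attainable degrees $\{d \in D \mid d < n\}$ is all of $\{0,1,\ldots,n-1\}$, so that $|\{d \in D \mid d < n\}| = n$, giving $\dim \Code^2 \geq n$; combined with the trivial upper bound $\dim \Code^2 \leq n$ this yields $\dim \Code^2 = n$. The same argument applied to $\Code^\perp$ — which by \cref{thm:dual} (in the $\FamilyMult$ case) or directly by the description of its parity-check matrix is itself a twisted RS code with $\ell$ twists, twists $k-h_i$ and hooks $n-k-t_i$ — handles $\dim((\Code^\perp)^2) = n$. So the entire theorem reduces to a combinatorial statement about sumsets.

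Concretely, recall $S = (\{0,\ldots,k-1\}\setminus\{h_i\}) \cup \{t_i + k - 1\}$, and we must show $\{s + s' \mid s,s' \in S\} \supseteq \{0,\ldots,n-1\}$. First I would observe that $S$ contains the long run $\{0,1,\ldots,r-2\}$ (since $h_i = r-1+i \geq r$ for $i \geq 1$, all hooks lie above $r-1$), so $S + S$ already covers $\{0,1,\ldots,2r-4\}$. Then I would track the remaining ``high'' elements of $\{0,\ldots,k-1\}$ that survive in $S$ — namely $\{r-1,\ldots,k-1\}$ minus the hooks $\{r,\ldots,r-1+\ell\}$ — together with the twisted degrees $\{t_i + k - 1\}$. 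The twist degrees are $t_i + k - 1 = (i+1)(r-2) + 1$ for $i = 1,\ldots,\ell$, which are roughly equally spaced multiples of $r-2$. Adding an element of $\{0,\ldots,r-2\}$ to each of these consecutive ``landmarks'' should tile the interval from $2r-4$ up to about $(\ell+1)(r-2) + 1 + (r-2) = (\ell+2)(r-2) + 1 \approx n$, using the definition $r = \lceil \tfrac{n+1}{\ell+2}\rceil + 2$ which is chosen precisely so that the largest landmark plus $r-2$ reaches past $n-1$. I would need to verify the gaps between consecutive reachable blocks never exceed $r-2$ (so the blocks overlap or abut), and that the small ``hole'' near $k$ created by deleting the $\ell$ hooks from $\{0,\ldots,k-1\}$ is covered by sums involving the first twist degree $t_1 + k - 1 = 2(r-2)+1$ together with the low run — this is where the inequalities $\ell < \sqrt n - 4$ and $k > 2\sqrt n + 6$ bounding $\ell$ relative to $k$ and $r$ come into play.

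The main obstacle, I expect, is the careful bookkeeping at the two ``ends'': first, confirming that the deletion of hooks $h_1,\ldots,h_\ell$ does not open a gap in $S+S$ near degree $k$ that the twist monomials fail to bridge (this requires $t_1 + k - 1 = 2r - 3$ to be small enough, i.e.~$2r - 3 \leq$ the top of the already-covered block $\{0,\ldots,2r-4\}$ plus something, which forces a relation between $r$ and the lower endpoints — precisely the constraint $\ell > \tfrac{n+1}{k-\sqrt n} - 2$); and second, confirming the top landmark $t_\ell + k - 1 = (\ell+1)(r-2)+1$ satisfies $(\ell+1)(r-2) + 1 + (r-2) \geq n - 1$, i.e.~$(\ell+2)(r-2) \geq n - 2$, which is immediate from $r - 2 = \lceil\tfrac{n+1}{\ell+2}\rceil \geq \tfrac{n+1}{\ell+2}$. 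For the dual code one repeats the analysis with the transformed parameters; since $k - h_i$ ranges over $\{k-r-\ell,\ldots,k-r-1\}$ (all less than $n-k$, using $k < n/2$) and $n-k-t_i$ gives a symmetric family of landmarks, the same tiling argument goes through, possibly after checking that the dual dimension $n-k$ is still large enough relative to $\ell$, which follows from $k \leq n/2 - 2$. The routine inequality-chasing to make every block overlap is tedious but mechanical; the conceptual content is entirely the ``landmarks spaced by $r-2$, plus a run of length $r-2$, tiles $[0,n)$'' observation.
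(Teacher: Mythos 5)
Your argument for $\dim\Code^2 = n$ is essentially the paper's own proof: both rest on \cref{cor:D}, on the run of untwisted monomials $x^0,\ldots,x^{r-1}$ below the smallest hook $h_1=r$ (you only use $x^0,\ldots,x^{r-2}$, which still works), and on the landmarks $t_i+k-1=(i+1)(r-2)+1$ spaced by $r-2$ so that the blocks $\{(i+1)(r-2)+1,\ldots,(i+2)(r-2)+1\}$ chain together and reach $(\ell+2)(r-2)+1\geq n+2$. Your worry about a ``hole near $k$'' from the deleted hooks is a non-issue: the landmark blocks already cover everything above $2r-4$, so the high untwisted monomials are never needed and none of the auxiliary inequalities on $\ell$ versus $k$ and $\sqrt n$ enter the proof. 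That part is fine.

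The gap is in the dual statement. Your plan is to realise $\Code^\perp$ as a twisted RS code and rerun the tiling argument, but \cref{thm:dual} (and the parity-check description in its proof, which uses $\alpha_i^n=1$) only applies when $\alphaVec$ is a multiplicative subgroup, i.e.\ to $\Code\in\FamilyMult$. The theorem is claimed for all of $\Family$, where $\alphaVec$ is an arbitrary set of distinct nonzero elements of $\FF_{q_0}$ and the dual is not known to be a twisted RS code at all, so \cref{cor:D} is not available. Even restricted to $\FamilyMult$, ``the same tiling argument goes through'' is not justified: the dual's parameters are structurally different --- its twist degrees $(n-k-1)+\tilde t_i = n-1-h_i = n-r-i$ are \emph{clustered} within a window of width $\ell$ rather than spaced by $r-2$, while its hooks $\tilde h_i = n-(i+1)(r-2)-2$ are the ones spread out --- so a genuinely different covering argument would be needed. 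The paper sidesteps all of this with one line: $\Code^\perp$ is MDS of dimension $n-k>n/2$ (since $k\leq n/2-2$), and any MDS code $\mathcal D$ satisfies $\dim\mathcal D^2\geq\min\{2\dim\mathcal D-1,n\}$, which here equals $n$. You should replace your dual argument with that observation (or supply the missing duality and covering arguments for the general case, which is much harder).
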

\begin{proof}
  Let $r, \vec t, \vec h$ be as in \cref{def:cryptofamily}.
  We will use \cref{cor:D}, so let $D = \{ \deg(f \cdot g) \mid f,g \in \evpolys \}$.
  Since $x^0,\ldots,x^{r-1} \in \evpolys$ then $0,\ldots,2r-2 \in D$.
  For each $i$, there is a polynomial of degree $t_i+k-1$ in $\evpolys$, so also $(i+1)(r-2)+1,(i+1)(r-2)+2,\ldots,(i+1)(r-2)+r \in D$.
  Summing up we have $0,1,\ldots,(\ell+2)(r-2)+2 \in D$.
  By definition of $r$, we have $(\ell+2)(r-2)+2 \geq n-1$, so \cref{cor:D} implies the claim for $\Code$.
  We have $\dim (\Code^\perp)^2 = n$ since $\Code^\perp$ is MDS and has dimension $>n/2$ \cite[p.31]{randriambololona2015products}.
\end{proof}

By a simple, but more technical, argument it can be shown that also shortenings of the codes at up to two positions have maximal Schur square dimension. The proof uses that the shortened code has evaluation polynomials of degrees $2,\dots,r$ and $t_i+k-1$, so its Schur square has evaluation polynomials of degrees $2,\dots,n-1$ (i.e., $n-2$ consecutive degrees), which implies the claim since $0$ is not included in the evaluation points. This proves the resistance against the attack in \cite[Section~6]{couvreur2014distinguisher}.

\begin{remark}
It is not hard to come up with more twisted RS codes of Schur square dimension $n$.
The crucial ingredient of the proof of \cref{thm:distinguisher_resistant_construction} is that $x^0,\ldots,x^{r-1} \in \evpolys$.
As long as all hooks $h_i$ are at least $r$, then the argument will work.
The reason for setting $h_1$ as small as possible is discussed in the following section, but the remaining $h_i$ could be chosen as any subset of $\{r+1, \ldots, k-1 \}$.
Further, whenever $(\ell + 2)r$ is much bigger than $n-1$, the choice of $t_i$ can be perturbed in many ways.
\end{remark}

\subsection{Resistance to Sidelnikov--Shestakov \& Wieschebrink attack}

The Sidelnikov--Shestakov (SiS) attack  \cite{sidelnikov1992insecurity} completely breaks McEliece with GRS codes; 
we use the description from \cite{wieschebrink2010cryptanalysis}.
Let $[\mI \mid \mA]$ be the systematic generator matrix of a GRS code with evaluation points $\alphaVec \in \Fq^n$ and column multipliers $\vec v \in (\Fq^\star)^n$.
Using the automorphisms of a GRS code, the attacker can assume $\alpha_1 = 0$ and $\alpha_1 = 1$.
Row $i$ of the matrix is the evaluation of the polynomial
$f_i = c_i \textstyle\prod_{\mu=1, \mu \neq i}^{k} (x-\alpha_\mu)$,
where $c_i$ is a suitably chosen constant in $\Fq$.
It is well known \cite{roth1985generator,roth1989mds} that the matrix $\mA$ is a Cauchy matrix exactly when the code is GRS.
This means that for $d_j=\prod_{\nu=1}^k(\alpha_j-\alpha_\nu)$ for $j=k+1,\dots,n$, we have the simple expression
\begin{equation}
A_{i\,j}=\tfrac{c_i \cdot d_j}{\alpha_{j}-\alpha_i}.\label{eq:Cauchy_RS}
\end{equation}
Due to this compact formula, after guessing the values $c_1, c_2 \in \Fq$, the remaining secret values $\alpha_3,\dots,\alpha_n$ and $v_1,\dots,v_n$ can be computed by simple linear equations.

Since \cref{thm:distinguisher_resistant_construction} shows that the codes of $\Family$ have large Schur square, then none of these codes is GRS.
Therefore, a systematic generator matrix for these codes has $[ \mI \mid \mA ]$ where $\mA$ is \emph{not} Cauchy.
Since the SiS attack is rested on the simple Cauchy structure, it will not work for these codes.

Wieschebrink \cite{wieschebrink2006attack} generalised the SiS to apply for sub-codes of RS codes of small codimension.
Assume that we have given a $k \times n$ generator matrix $[\mI \mid \mA]$ of a subcode of an $(n,k+\delta)$ RS code.
Then, the $i$-th row of the matrix $\mA$ is obtained from a polynomial
$f_i = c_i(x) \textstyle\prod_{\mu=1, \mu \neq i}^{k} (x-\alpha_\mu)$,
where $c_i(x) \in \Fq[X]$ is a polynomial of degree $\deg c_i(x) \leq \delta$.
One now needs to guess the polynomials $c_1(x)$ and $c_2(x)$ instead of only scalars, which will succeed after at most $\approx q^{2\delta}$ iterations.
For large enough values of $\delta$, this is infeasible.

For $\Code \in \Family$, since $\dim \Code^2 = n$, then \cref{prop:dim_Couter_k<n/2} implies that any GRS code $\Couter$ with $\Code \subset \Couter$ must have $\dim \Couter \geq k + (n-2k+1)/2$.
This very quickly makes Wieschebrink's attack infeasible for $\Code$.

Wieschebrink's attack can also be carried out on the dual code, so that one seeks a GRS code $\Cinner^\perp$ such that $\Code^\perp \subset \Cinner^\perp$, i.e.~$\Cinner \subset \Code$.
If $\dim \Cinner$ is close to $\dim \Code$, this is a bad sign: it may reveal structural information on $\Code$ or may drastically reduce the cost of brute-force decoding the encrypted message.
We are at this point unable to categorically refute this attack.
\cref{prop:dim_Cinner} provides only a small separation between $\Cinner$ and $\Code$, and it is not hard to realise that this separation can never be greater than the number of twists $\ell$, which we wish to keep fairly small for decoding purposes.
As per \cref{lem:GRSapprox}, the obvious GRS code contained in $\Code$ is the one with the same evaluation points $\alphaVec$ and dimension $\min\{h_1\} = r$.
If $\Code$ contains a larger GRS code, it seems it would have to be a surprising, or spurious, code with other evaluation points and column multipliers.

If we choose $n \mid q_0-1$ and $\Code \in \FamilyMult$, then $\Code^\perp$ is itself a twisted RS code by \cref{ssec:decoding}.
Now the obvious GRS code $\Cinner$ such that $\Code^\perp \subset \Cinner^\perp$ is the one having the same evaluation points and dimension $\max\{\tilde t_i + n - k\} = n-r$, where $\tilde \tVec = k - \hVec$ is the twist vector of $\Code^\perp$.
Perhaps this strengthens the belief that it is not likely that $\Code$ contains another, surprisingly large GRS code.

Summarising, assuming that $\Code \in \Family$ contains no surprisingly large GRS code, then Wieschebrink's attack on $\Code$ or $\Code^\perp$ will have work factor roughly $q^{2\delta}$ if
\[
  k \in [r+\delta,\tfrac{n+1}{2}-\delta] \approx [\tfrac{n}{\ell+1}-\delta,\tfrac{n}{2}-\delta] \ .
\]

\subsection{Wieschebrink's Squaring Attack}
Wieschebrink gave yet another attack in \cite{wieschebrink2010cryptanalysis} for McEliece with subcodes of RS codes:
given a scrambled generator matrix for the $[n,k]$ code $\Code$ which is a subcode of an unknown GRS code $\Couter$, the attack finds the parameters of $\Couter$.
The idea is that $\Code^2$ and $\Couter^2$ agree with high probability, if $\Code$ is a random subcode of $\Couter$ with not too small co-dimension.
If $\dim \Couter > n/2$ then one can first shorten $\Code$ and $\Couter$ at enough random positions and then consider their square.
Applied to twisted RS codes, if $\Code \in \Family$ then we saw in the previous subsection that $\dim \Couter > n/2$, thus the squaring attack should first shorten $\Code$ at some random positions.
If we can show that $(\Code^*)^2$ is never, or rarely, a GRS code, where $\Code^*$ is some shortening of $\Code$, then $\Code$ is resistant to (straightforward version of) the squaring attack.

At the time of writing, we don't have a such a theorem.
However, we believe the following approach is promising:
assume that we shorten at the first $s$ positions and write $\alphaVec = (\alphaVec_S | \alphaVec^*)$ with $\alphaVec_S$ having length $s$.
Then $C^* = \ev{\mathcal{P}}{\alphaVec^*}$ where $\mathcal{P} = \{ f \in \evpolys \mid f(\alpha) = 0 \textrm{ for } \alpha \in  \alphaVec_S \}$.
Hence $\dim (C^*)^2$ can be characterised using the tools of \cref{ssec:squares} by looking at $\bar D := \{ \deg\big( (f \cdot g) \mod M  \big) \mid f, g \in \mathcal P \}$, where $M = \prod_{i=s+1}^{n}(X - \alpha_i)$.
Since the set $\{ \deg f \mid f \in \mathcal P \}$ will likely have gaps as well as a smallest element of at least $s$, it seems unlikely that $\bar D$ will contain only $2(k-s)$ elements.
Then $(C^*)^2$ is not a GRS code.

\subsection{Example Code Parameters}
\label{ssec:example}

We consider the following example parameters:
Let $(n,k) = (255,117)$, $\ell = 1$, $q_0 = 2^8$ and pick a code over $\Fq, q = 2^{16}$, with these parameters as in~\cref{def:cryptofamily}.
The number of such codes is larger than $n!(q-\sqrt{q}) \approx 2^{2038} \cdot 2^{16}$.
The actual number of inequivalent codes is presumably much smaller, but seems to suffice for avoiding exhaustive search attack.
Under the assumptions of the previous subsections, the work factor of Wieschebrink's attack is $2^{336}$.

As for generic decoding attacks, we consider classical information set decoding \cite{prange1962use,mceliece1978}\footnote{%
  The improvements for non-binary information-set decoding considered in \cite{peters2010information} seem to be not more efficient since, unlike the original algorithm \cite{mceliece1978}, the cost of an iteration strongly depends on the
  field size, which is large for our codes: $q=2^{16}$. However, this should be more carefully studied e.g.~by estimating the number of iterations by a Markov chain simulation as in \cite{peters2010information}.
}
with work factor $W_\mathrm{I} = \binom{n}{k}/\binom{n-\tau}{k} k^3 \log_2(q)^2$, where $\tau$ is the number of errors inserted by the sender.
Since the example codes can correct up to $\tau_\mathrm{unique} = 69$ errors uniquely and $\tau_\mathrm{list} = 83$ using a list decoder, we obtain the following work factors for unique and list decoding, respectively:
$W_\mathrm{I,unique} \geq 2^{105}$, $W_\mathrm{I,list} \geq 2^{126}$.
In both cases, the key size is
\begin{equation*}
K_\mathrm{sys} = k(n-k) \log_2(q)/8192 = 31.5 \, \mathrm{KB},
\end{equation*}
using a systematic generator matrix for the public key.

The security level of the two variants above is $\geq 2^{100}$ resp.\ $\approx 2^{128}$.
We compare the key size to parameter proposals for binary Goppa codes of similar security levels:
\begin{description}
\item[$2^{100}$ \cite{canteaut1998cryptanalysis}:] $(n,k) = (2048,1608)$, $\tau= 40$,
$K = 86.4 \, \mathrm{KB}$.
\item[$2^{128}$ \cite{Barbier_2011}:] $(n,k) = (3262,2482)$, $\tau= 66$,
$K = 236.3 \, \mathrm{KB}$.
\end{description}
In this example, twisted RS codes reduce the key size by a factor $2.7$ and $7.4$ at security levels $\approx 2^{100}$ resp.\ $\approx 2^{128}$.

\section{Conclusion}

We have presented a natural generalisation of twisted RS codes, pointed out a large subfamily of MDS codes, and presented a decoder that is feasible for $\ell=1,2$ twists.
Those codes whose evaluation points form a multiplicative group are closed under duality with explicit duals.
We also showed that the Schur square of a twisted RS code is usually large, which shows that they are in a sense ``far'' from GRS codes, cf.~\cref{lem:GRSapprox}.
Furthermore, we identified a subfamily of twisted RS codes that resist some known structural attacks on the McEliece cryptosystem: Sidenlikov--Shestakov \cite{sidelnikov1992insecurity}, Wieschebrink \cite{wieschebrink2006attack} and Schur square-distinguishing \cite{couvreur2014distinguisher}.
For Wieschebrink's attack on the dual code and Wieschebrink's squaring attack \cite{wieschebrink2010cryptanalysis} we could say only that the attack does not \emph{seem} to apply.
We gave example parameters that achieve lower key sizes than the original McEliece cryptosystem for the same security level.

Whether (some sub-families of) twisted RS codes are suitable for the McEliece cryptosystem or not remains to be seen, but we believe that our analysis motivates looking closer at this question: this includes studying Wieschebrink's attack on the dual code and Wieschebrink's squaring attack, as well as seeking completely new attacks utilizing the particular structure of twisted RS codes.

\bibliographystyle{IEEEtran}
\bibliography{main}

\end{document}